\newtheorem{theorem}{Theorem}[section]
\newtheorem{conjecture}[theorem]{Conjecture}
\newtheorem{Theorem}{Theorem}
\newtheorem{Lemma}[theorem]{Lemma}
\newtheorem{Corollary}[theorem]{Corollary}
\theoremstyle{definition}
\newtheorem{Definition}[theorem]{Definition}
\newtheorem{definition}[theorem]{Definition}
\newcommand{\R}{{\mathbb R}}
\newcommand{\C}{{\mathbb C}}
\DeclareMathOperator*{\Ex}{\mathbf E}
\newcommand{\wtilde}{\widetilde}
\titlespacing*{\section}{0pt}{7pt}{4pt}
\titlespacing*{\subsection}{0pt}{3pt}{2pt}
\titlespacing*{\paragraph}{0pt}{5pt}{3pt}
\titleformat{\section}[block]
{\LARGE\bfseries}
{\thesection}{10pt}{}{}
\titleformat{\subsection}[block]
{\Large \sffamily}
{\thesubsection}{10pt}{}[]
\titleformat{\paragraph}[runin]
{\large \bfseries}
{}{}{}[]
\begin{document}
\begin{titlepage}

\title{\bfseries \LARGE Tighter Relations Between Sensitivity and Other Complexity Measures}

\author{
Andris Ambainis\thanks{Supported by the European Commission under the project QALGO
(Grant No. 600700) and the ERC Advanced Grant MQC (Grant No. 320731).}\\ {\small U. of Latvia}\ \and
 Mohammad Bavarian\thanks{Supported by NSF through STC Award 0939370, and CCF-1065125.} \\ {\small MIT}  \and
 Yihan Gao \\ {\small UIUC} \and
 Jieming Mao  \\ {\small Princeton} \and
 Xiaoming Sun\thanks{Supported in part by the National Natural Science Foundation of China Grant 61170062, 61222202.} \\ {\small Chinese Academy of Sciences}
 \and  Song Zuo \\ {\small Tsinghua}}
\date{}

\maketitle

\begin{abstract}
Sensitivity conjecture is a longstanding and fundamental open problem in the area of complexity measures of Boolean functions and decision tree complexity. The conjecture postulates that the maximum sensitivity of a Boolean function is polynomially related to other major complexity measures. Despite much attention to the problem and major advances in analysis of Boolean functions in the past decade, the problem remains wide open with no positive result toward the conjecture since the work of Kenyon and Kutin from 2004~\cite{KK}.

In this work, we present new upper bounds for various complexity measures in terms of sensitivity improving the bounds provided by Kenyon and Kutin.
Specifically,  we show that $\deg(f)^{1-o(1)}=O(2^{s(f)})$ and $C(f) \leq 2^{s(f)-1} s(f)$; these in turn imply various corollaries regarding the relation between sensitivity and other complexity measures, such as block sensitivity, via known results.
The gap between sensitivity and other complexity measures remains exponential but these results are the first
improvement for this difficult problem that has been achieved in a decade.

 \end{abstract}
 
\end{titlepage}

\section{Introduction}
In this paper, we are concerned with a fundamental and challenging open problem in complexity theory known as the \emph{Sensitivity Conjecture} (also called sensitivity vs.~block sensitivity problem). Since its appearance in Nisan and Szegedy~\cite{Nisan94}, the problem has received a considerable amount of attention from numerous researchers (see \cite{Aaronson10, HKP11}). By now many equivalent formulations and connections between this conjecture and other unsolved problems in combinatorics and analysis of Boolean functions have been discovered which has resulted in an increase in the prominence and popularity of the conjecture.

The conjecture originates from the theory of complexity measures of Boolean functions and decision tree complexity. The basic object of study in this area is \emph{decision tree complexity} of Boolean functions and also its \emph{randomized} or \emph{quantum} variants.\footnote{Recall that given a Boolean function $f:\{0,1\}^n\rightarrow \{0,1\}$, decision tree complexity of $f$ refers to the minimum number of queries that an algorithm querying the input variables $(x_1,x_2, \ldots, x_n)$ must make to be able to successfully compute $f$ on every input. The reason for the name \emph{decision tree} is that any query algorithm can be identified with a directed tree with inner vertices labeled by the input variables, directed edges corresponding to the value of the variable just read while each leaf contains the value outputted by the algorithm upon reaching that leaf. With this picture, the query complexity of the algorithm exactly corresponds to the \emph{depth} of the tree.} The study of decision tree complexity is directly connected (and is usually done via) the study of more combinatorial and analytic measures of complexity of Boolean functions such as Fourier degree, block sensitivity, certificate complexity and etc. 
Since the time of Nisan and Szegedy~\cite{Nisan94}, it was known that the above and all other major complexity measures of Boolean functions are polynomially related to one another. The only major exception to the above principle is the \emph{maximum sensitivity} which is still unknown to be polynomially related to other complexity measures. The sensitivity conjecture is precisely the statement that the above principle also holds in the case of maximum sensitivity.

\begin{conjecture}[sensitivity conjecture]
There exists a constant  $d\in \R^+$ such that for any Boolean function $f:\{-1,1\}^n \rightarrow \{-1,1\}$ we have 
\[ bs(f)=O(s(f)^d),\]
where $s(f)$ and  $bs(f)$ denote the sensitivity and the block sensitivity (defined in Section~\ref{sec:preliminary}) of the function $f$ .
\end{conjecture}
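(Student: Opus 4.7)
The plan is to reduce the sensitivity conjecture to a combinatorial statement about induced subgraphs of the Boolean cube via the classical Gotsman--Linial equivalence, and then attack that statement with spectral tools. The concrete target is to prove $s(f) = \Omega(\sqrt{\deg(f)})$; combined with the Nisan--Szegedy bound $bs(f) = O(\deg(f)^2)$, this would immediately give $bs(f) = O(s(f)^4)$, confirming the conjecture with $d=4$.

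First I would set up the Gotsman--Linial reduction carefully. For any $f \colon \{0,1\}^n \to \{0,1\}$, the sensitivity $s(f)$ is the maximum, over $x \in \{0,1\}^n$, of the number of neighbors of $x$ in $Q_n$ whose $f$-value differs from $f(x)$. The reduction states that proving sensitivity conjecture reduces to the following purely graph-theoretic claim: every induced subgraph of $Q_n$ on more than $2^{n-1}$ vertices contains a vertex of degree at least $n^{\alpha}$ for some universal $\alpha > 0$. I would re-derive this equivalence to make sure the desired quantitative form is what pops out.

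Second, I would attack the combinatorial claim spectrally, via Cauchy interlacing. The ordinary adjacency matrix of $Q_n$ has top eigenvalue $n$, but its interlacing behavior under restriction to induced subgraphs is too weak to be useful. Instead I would try to construct a signing $A_n \in \{-1,0,+1\}^{2^n \times 2^n}$ of the edges of $Q_n$ satisfying the identity $A_n^2 = n \cdot I$. Any such signing has all its eigenvalues equal to $\pm\sqrt{n}$, each with multiplicity exactly $2^{n-1}$. Restricting $A_n$ to an induced subgraph on more than $2^{n-1}$ vertices forces, by Cauchy interlacing, a principal submatrix with an eigenvalue of absolute value at least $\sqrt{n}$. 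Since the spectral radius of such a $\pm 1$-signing is bounded by the maximum degree of the underlying induced subgraph, this yields the desired degree lower bound of $\sqrt{n}$, giving $\alpha = 1/2$.

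The hard part, and the reason this conjecture has remained open, is exhibiting a signing $A_n$ with $A_n^2 = n\,I$ and proving the identity cleanly. A natural inductive attempt writes $A_n$ as a $2\times 2$ block matrix built from $A_{n-1}$ and an identity block encoding the extra coordinate; the relation $A_n^2 = nI$ then reduces to an anticommutativity condition between the two blocks, which in turn constrains the signs that can appear. Pinning down the right construction (and not merely an approximate version that gives, say, $\Omega(n^{1/2 - \epsilon})$) is where I expect the real obstacle to lie: every weaker bound on the spectral radius of the restriction translates into a weaker polynomial exponent, and only the tight squaring identity produces the expected $s(f) = \Omega(\sqrt{\deg f})$. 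Everything else in the plan is mechanical once this ingredient is in place.
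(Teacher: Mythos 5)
The paper does not prove this statement; it is explicitly labelled a conjecture, and the paper's own contribution (Theorems~\ref{sens_deg1} and~\ref{thm:C(f)} and their corollaries) is a set of exponential-in-$s(f)$ bounds that improve the Kenyon--Kutin constant while leaving the conjecture open. So there is no in-paper proof to compare your attempt against.

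That said, your proposal is correct, and it is in fact precisely the argument that later resolved the conjecture (Huang, 2019). The Gotsman--Linial reduction to the induced-subgraph degree problem is the equivalence the paper itself gestures at in its concluding remarks. The step you flag as ``the real obstacle'' --- producing a signing $A_n$ of the edges of $Q_n$ with $A_n^2 = n\cdot I$ --- is supplied exactly by the block-recursive scheme you sketch: take
\[
A_1 = \begin{pmatrix} 0 & 1 \\ 1 & 0 \end{pmatrix},
\qquad
A_n = \begin{pmatrix} A_{n-1} & I \\ I & -A_{n-1} \end{pmatrix}.
\]
A direct computation gives
\[
A_n^2 = \begin{pmatrix} A_{n-1}^2 + I & 0 \\ 0 & A_{n-1}^2 + I \end{pmatrix},
\]
so $A_n^2 = nI$ by induction; the off-diagonal blocks cancel automatically and no separate anticommutation lemma is required. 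Since $\mathrm{tr}(A_n)=0$, the eigenvalues $\pm\sqrt{n}$ each have multiplicity $2^{n-1}$, Cauchy interlacing forces any principal submatrix on more than $2^{n-1}$ indices to have an eigenvalue at least $\sqrt{n}$, and bounding the spectral radius of a symmetric $\{-1,0,1\}$-matrix by its maximum absolute row sum (which equals the maximum induced degree) yields $\Delta \geq \sqrt{n}$. Chaining through Gotsman--Linial then gives $s(f)\geq\sqrt{\deg(f)}$, and with Nisan--Szegedy's $bs(f)=O(\deg(f)^2)$ one gets $bs(f)=O(s(f)^4)$. In short, what you describe as the missing ingredient is a two-line induction, and your plan is a complete proof --- indeed a much stronger conclusion than anything the paper you are reading actually establishes.
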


Let us note that in the formulation of the conjecture, we could have opted to replace the block sensitivity with any other widely used complexity measure of Boolean functions (such as Fourier degree, deterministic decision tree complexity, etc.) because as we mentioned before, all these are polynomially related to block sensitivity~\cite{BW02, Nisan94}.

\subsection{Prior work}
As discussed above, through the work of various researchers by now many different equivalent forms of the sensitivity conjecture are available. Almost all of these different formulations and various approaches to the conjecture are discussed in the excellent survey of P.~Hatami et al.~\cite{HKP11} (see also the blogpost of Aaronson~\cite{Aaronson10}). We refer to these works for a more detailed exposition of the background. We briefly recall some of the more immediately relevant facts:

The best known upper bound on block sensitivity is
\begin{equation} \label{eqn:best_previous_bound}
bs(f) \leq (\frac{e}{\sqrt{2\pi}}) e^{s(f)} \sqrt{s(f)},
\end{equation}
given by Kenyon and Kutin \cite{KK}. In the other direction, the first progress on the lower bound was made by Rubinstein~\cite{Rubinstein95} who gave the first quadratic separation for block sensitivity and sensitivity by constructing a Boolean function $f$ with $bs(f)={1\over 2}s(f)^2$. Currently, the best lower bound is due to Ambainis and Sun who in~\cite{AS11} exhibited a function with $bs(f) = {2\over 3}s(f)^2-{1\over 2}s(f)$.

\subsection{Our results} Our first result in this paper is the following estimate regarding the relation between the maximum sensitivity and Fourier degree of a Boolean function.
\begin{Theorem}\label{sens_deg1} Let $f:\{0,1\}^n\rightarrow \{-1,1\}$ be a Boolean function. Then
\[ \deg(f)^{1-o(1)}\leq  \, s(f)2^{s(f)} \, ,\]
where $o(1)$ denotes a term that vanishes as $\deg(f)\rightarrow \infty$.
\end{Theorem}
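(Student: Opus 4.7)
The plan is to first reduce the statement to a bound on the number of variables of a polynomial of full degree. Let $d=\deg(f)$ and fix a set $S\subseteq[n]$ with $|S|=d$ such that the monomial $\prod_{i\in S}x_i$ appears with nonzero coefficient in the multilinear expansion of $f$. Restrict $f$ by setting $x_j=0$ for every $j\notin S$; the resulting function $g:\{0,1\}^{d}\to\{-1,1\}$ still contains that monomial, so $\deg(g)=d$ and $g$ depends genuinely on all of its variables. Since every sensitive flip in $g$ lifts to a sensitive flip in $f$, we also have $s(g)\leq s(f)$. Hence it suffices to prove the bound $n^{1-o(1)}\leq s(g)\,2^{s(g)}$ whenever $g:\{0,1\}^n\to\{-1,1\}$ has $\deg(g)=n$.

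Next I would set up a recursion on $s$. Write $g=g_0+x_i(g_1-g_0)$ where $g_b=g|_{x_i=b}$. Because $\deg(g)=n$, the polynomial $g_1-g_0$ has degree exactly $n-1$, so in particular $g_0\neq g_1$ and flipping $x_i$ is sensitive at some input. Introduce
\[
N(s):=\max\{\,n \;:\;\exists\,g:\{0,1\}^n\to\{-1,1\},\ \deg(g)=n,\ s(g)\leq s\,\}.
\]
The goal becomes $N(s)\leq (s\cdot 2^s)^{1+o(1)}$. I would try to establish a recurrence of the form $N(s)\leq 2N(s-1)+\mathrm{poly}(s)$: choose a branching variable $x_i$ and a bit $b\in\{0,1\}$ so that $s(g|_{x_i=b})\leq s-1$, bounding the ``heavy'' branch by $N(s-1)$ and using a direct argument on the ``light'' branch. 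Telescoping gives $N(s)=O(s\cdot 2^s)$.

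The $o(1)$ in the theorem is designed to absorb the slack from steps at which one cannot invoke the ideal recurrence exactly; summing such losses over a recursion of depth $s\approx \log_2 d$ yields a factor that is sub-polynomial in $d$ and therefore negligible as $d\to\infty$.

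The main obstacle is the recursive step: producing a variable $x_i$ and bit $b$ for which the restricted function has strictly smaller sensitivity. This is not automatic, since it is possible to have $s(g|_{x_i=b})=s(g)$ for every $i$ and both $b$. To force a decrease, I would fix a maximally sensitive input $x^\star$ of $g$, look at its $s$ sensitive coordinates, and identify among them—by a Kenyon--Kutin-style structural/counting lemma—a direction whose restriction provably shrinks the sensitivity witness. The improvement in the base of the exponential from $e$ (as in~\cite{KK}) to $2$ requires a tighter accounting of these sensitive directions that avoids the double counting inherent in their permutation argument; getting this sharper counting to go through cleanly is the hardest part of the proof.
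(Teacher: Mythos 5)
Your proposal takes a genuinely different route from the paper, but it contains a gap at exactly the point you flag as "the hardest part," and that gap is not a technical detail one could plausibly grind through — it is the whole problem.

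The paper does not recurse on sensitivity at all. It recurses on \emph{degree}, which is guaranteed to drop under restriction. Concretely, the paper defines an $(l,r)$ S-triple as a tuple $(x,L,R)$ with $L\subseteq R\subseteq[n]$, $|L|=l$, $|R|=r$, and $f(x)\neq f(x^{(i)})$ for all $i\in L$. It then (i) upper-bounds the number of S-triples trivially in terms of $s(f)$; (ii) for each $R$ of size $r$, uses a Fourier argument (Lemma~\ref{restriction_lemma}) to find a restriction $z$ compatible with $R$ so that $f|_z$ still has full degree $r$, whence by the inductive hypothesis applied at the \emph{smaller} degree $r$ the restricted function contributes an S-triple; and (iii) — crucially — applies Harper's isoperimetric inequality to the set $A\subseteq Z_R$ of "good" shifts $z$ to show that a single witness $z$ forces at least $2^{n-r-(l+1)s(f)}$ many good shifts, i.e.\ exponentially many S-triples per $R$. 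Comparing the two counts gives the theorem. This design deliberately avoids ever needing to certify that a one-variable restriction decreases sensitivity.

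Your proposed recurrence $N(s)\le 2N(s-1)+\mathrm{poly}(s)$ is precisely what one would want, and if you could establish it the bound $N(s)=O(s\cdot 2^s)$ would in fact be \emph{stronger} than Theorem~\ref{sens_deg1} (it removes the $(\cdot)^{1+o(1)}$ slack entirely). But the step "choose a branching variable $x_i$ and a bit $b$ so that $s(g|_{x_i=b})\le s-1$" is not available: as you yourself note, it can happen that every one-variable restriction preserves the sensitivity, and there is no known Kenyon--Kutin-style lemma that produces the required direction. Absent such a lemma, the recursion never starts. The paper's two moves that substitute for it — inducting on degree (which \emph{does} provably decrease), and converting one good restriction into exponentially many via the isoperimetric inequality — are the actual content of the proof, and your sketch has neither. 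So while the reduction to the full-degree case in your first paragraph matches the paper and is fine, what follows is a plan whose central lemma is unproved and, as far as is known, false as stated.
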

The proof of the above theorem is a mixture of techniques from Fourier analysis and combinatorics.
The argument is partly inspired by the arguments in the paper of Chung et al.~\cite{Chung88} which recently played an important role in~\cite{Aaronson13} where the query complexity of partial functions coming from the restrictions of parity function was studied.

For sensitivity versus certificate complexity, we can prove a somewhat stronger theorem which has direct consequences for sensitivity versus block sensitivity problem (which is the original formulation of sensitivity conjecture by Nisan and Szegedy \cite{Nisan94}).
\begin{Theorem}\label{thm:C(f)}
 For any Boolean function $f$,
\begin{equation}
C_1(f)\leq 2^{s_0(f)-1} s_1(f),\ \ \
C_0(f)\leq 2^{s_1(f)-1} s_0(f).
\end{equation}
\end{Theorem}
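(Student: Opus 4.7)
The plan is to prove the first inequality, $C_1(f) \leq 2^{s_0(f)-1} s_1(f)$, by induction on $s_0(f)$; the second inequality then follows by the duality $f \mapsto \neg f$, under which $0$-inputs and $1$-inputs (hence the pairs $(s_0, C_0)$ and $(s_1, C_1)$) interchange.

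For the base case $s_0(f) = 1$: for any $1$-input $x$ and any minimal $0$-block $B$ of $x$, the $0$-input $x \oplus \mathbf{1}_B$ carries $|B|$ sensitive coordinates (flipping any element of $B$ back recovers a $1$-input by minimality of $B$), forcing $|B| \leq s_0(f) = 1$. The minimal $0$-blocks of $x$ are thus exactly the singletons $\{i\}$ for $i \in T_x$, the set of sensitive coordinates of $x$; the minimum $1$-certificate at $x$ equals $T_x$, and $|T_x| \leq s_1(f) = 2^{0} s_1(f)$. For the inductive step with $s_0(f) = k \geq 2$, I would fix a $1$-input $x$ maximizing $C_1(f,x)$, include all $|T_x| \leq s_1(f)$ sensitive variables of $x$ in the certificate, and reduce the residual task — hitting those minimal $0$-blocks of $x$ that are disjoint from $T_x$ (which all have size between $2$ and $s_0(f)$) — to the $1$-certificate problem for an auxiliary Boolean function $g$ with $s_0(g) \leq s_0(f)-1$ and $s_1(g) \leq s_1(f)$. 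With such a $g$, the inductive hypothesis would give residual cost $\leq 2^{s_0(f)-2} s_1(f)$, so in total $C_1(f) \leq (1 + 2^{s_0(f)-2}) s_1(f) \leq 2^{s_0(f)-1} s_1(f)$, which is valid for $s_0(f) \geq 2$.

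The main obstacle is constructing the auxiliary $g$ with $s_0(g) \leq s_0(f) - 1$. The naive choice — restricting $f$ by fixing $T_x$ to the values $x_{T_x}$ — does not generally reduce $s_0$: for example, $f = x_1 \wedge (x_2 \vee x_3)$ at $x = (1,1,1)$ has $T_x = \{1\}$, but $f|_{x_1 = 1} = x_2 \vee x_3$ still satisfies $s_0 = 2$. A successful construction must instead exploit that every residual minimal $0$-block $B$ has $|B| \geq 2$ and that the corresponding $0$-input $x \oplus \mathbf{1}_B$ carries all of $B$ among its sensitive variables, so that some sensitive direction at every surviving $0$-input is already "pinned" by the earlier choice of $T_x$. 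As a potential alternative route, I would attempt a direct counting argument: for each $i$ in a minimum $1$-certificate $S$, choose a minimal $0$-block $B_i$ with $B_i \cap S = \{i\}$ and $|B_i| \leq s_0(f)$, and group the indices by their shadows $X_i := B_i \setminus \{i\} \subseteq [n] \setminus S$. Because $i$ is sensitive at the $1$-input $x \oplus \mathbf{1}_{X_i}$, each group has size at most $s_1(f)$, reducing the theorem to bounding the number of distinct shadows $X_i$ by $2^{s_0(f)-1}$.
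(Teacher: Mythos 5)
Your proposal does not constitute a complete proof: both routes you sketch have an acknowledged but unfilled gap at the key step, and the paper's own argument proceeds very differently.

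In the inductive route, the entire proof hinges on producing an auxiliary function $g$ with $s_0(g)\le s_0(f)-1$ and $s_1(g)\le s_1(f)$ that captures the cost of hitting the residual minimal $0$-blocks disjoint from $T_x$. You correctly observe that the obvious choice (restricting $f$ by pinning $T_x$ to $x_{T_x}$) fails to reduce $s_0$, and you even give a clean counterexample $f=x_1\wedge(x_2\vee x_3)$. What you do not supply is a replacement construction, and I do not see one that works in general: the requirement ``some sensitive direction at every surviving $0$-input is already pinned'' is a statement about $0$-inputs \emph{reachable from $x$} by flipping disjoint $0$-blocks, but $s_0(g)$ must control \emph{all} $0$-inputs of $g$, which is a much stronger demand. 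Without this construction there is no induction.

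The alternative counting route is more promising but also stops short at the key estimate. The setup is sound: for each $i$ in a minimum $1$-certificate $S$ at $x$, minimality of $S$ gives a minimal $0$-block $B_i$ with $B_i\cap S=\{i\}$ and $|B_i|\le s_0(f)$; setting $X_i=B_i\setminus\{i\}$, each fiber $\{i:X_i=X\}$ has size at most $s_1(f)$ because all such $i$ are sensitive coordinates of the $1$-input $x\oplus\mathbf{1}_X$. But the bound you need — at most $2^{s_0(f)-1}$ distinct shadows $X_i$ — is exactly the hard part, and you give no argument for it. A priori the shadows are arbitrary subsets of $[n]\setminus S$ of size at most $s_0(f)-1$, so nothing in the structure you have set up so far caps their number by $2^{s_0(f)-1}$ rather than by something depending on $n$.

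For contrast, the paper does not reduce along a single input or count shadows at all. It fixes a $1$-input achieving $C_1(f)=m$ with certificate $x_1=\dots=x_m=0$, considers the $m$ neighboring subcubes $Q_1,\dots,Q_m$ (each obtained by setting one certificate bit to $1$), and double-counts the total sensitivity $S=\sum_{i}\sum_{x\in Q_i}s(f,x)$. The lower bound on $S$ splits into sensitivity toward $Q_0$, sensitivity within each $Q_i$ (controlled by Harper's edge-isoperimetric inequality, Lemma~\ref{lem:iso}), and sensitivity toward the $Q_{i,j}$'s; the upper bound on $S$ uses $s(f,x)\le s_0(f)$ or $s_1(f)$ pointwise. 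Harper's inequality is precisely what injects the factor $2^{s_0(f)-1}$ (via the size lower bound $|A_i|\ge 2^{n-m-s_0(f)+1}$ of Lemma~\ref{lem:Ai} and a convexity estimate on $t\mapsto t\log t$). That global isoperimetric input is the missing ingredient in both of your routes, and it seems essential: a purely local minimal-block argument does not appear to produce the exponential-in-$s_0$ factor.
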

Here $C_0(f)$ and $C_1(f)$ denote the $0$-certificate complexity and $1$-certificate complexity of $f$. These notions -- along with the rest of the background material on complexity measures of Boolean functions -- are presented in Section \ref{sec:preliminary}.

Using the known relations between various complexity measures of Boolean functions, we can derive several consequences from the above result. 
\begin{Corollary}
\label{cor:1}
For any Boolean function $f$,
\begin{equation*}
bs(f)\leq C(f) \leq 2^{s(f)-1} s(f).
\end{equation*}
\end{Corollary}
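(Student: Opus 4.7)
The plan is to derive the corollary essentially as a bookkeeping exercise on top of Theorem \ref{thm:C(f)}, invoking one standard and well-known inequality between block sensitivity and certificate complexity.

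First I would recall the standard definitions: $s(f) = \max(s_0(f), s_1(f))$ and $C(f) = \max(C_0(f), C_1(f))$, where $s_b(f)$ (resp.\ $C_b(f)$) is the maximum of the sensitivity (resp.\ minimum certificate size) over inputs $x$ with $f(x) = b$. These will be spelled out in Section \ref{sec:preliminary}, so I may just cite them.

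Next I would apply Theorem \ref{thm:C(f)} directly to each branch. Since $s_0(f), s_1(f) \leq s(f)$ and the function $t \mapsto 2^{t-1} t$ is monotone increasing in $t \geq 1$, we get
\[
C_1(f) \leq 2^{s_0(f)-1} s_1(f) \leq 2^{s(f)-1} s(f),
\qquad
C_0(f) \leq 2^{s_1(f)-1} s_0(f) \leq 2^{s(f)-1} s(f).
\]
Taking the maximum of the two yields $C(f) \leq 2^{s(f)-1} s(f)$, which is the right-hand inequality of the corollary. (A tiny edge case: if $s(f) = 0$ then $f$ is constant, $C(f) = 0$, and the bound holds vacuously; otherwise $s(f) \geq 1$ and the monotonicity is legitimate.)

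Finally, the left-hand inequality $bs(f) \leq C(f)$ is a classical fact from the theory of decision tree complexity (see, e.g., the Nisan--Szegedy/Buhrman--de Wolf survey cited earlier in the paper as \cite{BW02,Nisan94}): any $1$-certificate of $x$ intersects every sensitive block of $x$ in at least one coordinate, so the number of disjoint sensitive blocks is bounded by the size of the smallest certificate. I would just cite this fact rather than reprove it. Chaining the two inequalities gives $bs(f) \leq C(f) \leq 2^{s(f)-1} s(f)$, as desired. There is no real obstacle here — the entire content of the corollary is packed into Theorem \ref{thm:C(f)}, and the corollary is merely the symmetric, ``forget which side'' version of that theorem combined with the textbook bound $bs \leq C$.
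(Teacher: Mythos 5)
Your proposal is correct and is essentially the argument the paper has in mind: the paper states Corollary~\ref{cor:1} without a written proof, explicitly presenting it as an immediate consequence of Theorem~\ref{thm:C(f)} together with the classical relation $bs(f)\leq C(f)$ from the standard literature (\cite{BW02,Nisan94}). Your unpacking—bound $C_0(f)$ and $C_1(f)$ via Theorem~\ref{thm:C(f)}, use $s_0(f),s_1(f)\leq s(f)$ to pass to $2^{s(f)-1}s(f)$, take the max, and prepend $bs(f)\leq C(f)$—is exactly that derivation.
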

Combining Theorem~\ref{thm:C(f)} and some previous results, we can also give another upper bound for block sensitivity.
\begin{Corollary}\label{thm:bs(f)}
For any Boolean function $f$,
\begin{equation}
bs(f)\leq \min\{2^{s_0(f)},2^{s_1(f)}\} s_1(f)s_0(f).
\end{equation}
\end{Corollary}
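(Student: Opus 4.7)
The plan is to prove the two symmetric inequalities
\[ bs(f) \le 2^{s_0(f)}\, s_0(f)\, s_1(f) \quad \text{and} \quad bs(f) \le 2^{s_1(f)}\, s_0(f)\, s_1(f), \]
whose minimum is exactly the stated bound; by the $0\leftrightarrow 1$ symmetry of the statement and of Theorem~\ref{thm:C(f)}, it suffices to prove the first. Writing $bs(f) = \max\{bs_0(f), bs_1(f)\}$, I would bound each of $bs_0(f)$ and $bs_1(f)$ separately by $2^{s_0(f)} s_0(f) s_1(f)$.

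The $bs_1$ half is immediate: combining the standard inequality $bs_1(f) \le C_1(f)$ (every disjoint minimal sensitive block on a $1$-input must intersect any $1$-certificate of that input, and disjoint blocks give disjoint hits) with the bound $C_1(f) \le 2^{s_0(f)-1} s_1(f)$ from Theorem~\ref{thm:C(f)} gives $bs_1(f) \le 2^{s_0(f)-1}\, s_1(f) \le 2^{s_0(f)}\, s_0(f)\, s_1(f)$, which is comfortably below the target (the last step is trivial for non-constant $f$, which is the only interesting case).

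The $bs_0$ half is where the hard work lies, because the direct chain $bs_0(f)\le C_0(f)\le 2^{s_1(f)-1} s_0(f)$ from the other half of Theorem~\ref{thm:C(f)} uses the ``wrong'' exponent $s_1(f)$ and can be far from the target when $s_1(f)\gg s_0(f)$. Here I would invoke the ``previous result'' alluded to in the statement: a classical asymmetry inequality of the form $bs_0(f) \le s_0(f) \cdot bs_1(f)$, relating the two sides' block sensitivities via the same-side pointwise sensitivity. Chaining this with the $bs_1$-bound of the previous paragraph then gives $bs_0(f) \le s_0(f) \cdot 2^{s_0(f)-1} s_1(f) = 2^{s_0(f)-1}\, s_0(f)\, s_1(f)$, as required. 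Running the same argument with the roles of $0$ and $1$ exchanged produces the companion bound $bs(f) \le 2^{s_1(f)} s_0(f) s_1(f)$, and the $\min$ in the conclusion is just the better of the two.

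The main obstacle is to pin down (or supply) the asymmetry inequality $bs_0(f) \le s_0(f) \cdot bs_1(f)$ and its dual. If a direct citation is not handy, the combinatorial route I would take is: fix a worst $0$-input $x$ realizing $bs_0(f)$ and partition its pairwise disjoint minimal sensitive blocks into singleton and non-singleton blocks. The singleton blocks correspond to distinct sensitive bits of $x$ and so number at most $s(x)\le s_0(f)$; each non-singleton block $B_i$, on the other hand, yields a $1$-input $y_i = x\oplus B_i$ at which every bit of $B_i$ becomes a singleton sensitive block (by the minimality of $B_i$), so the non-singleton blocks can be charged, with an overhead of at most $s_0(f)$, to the block-sensitivity witnesses at the various $y_i$'s, delivering the required factor of $bs_1(f)$.
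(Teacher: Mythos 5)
Your overall plan---split $bs(f)=\max\{bs_0(f),bs_1(f)\}$, bound each piece, and use $0\leftrightarrow1$ symmetry to get both entries of the $\min$---matches the paper, and your $bs_1$ half (via $bs_1(f)\le C_1(f)\le 2^{s_0(f)-1}s_1(f)$) is fine. The problem is the lemma you lean on for the $bs_0$ half.

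You invoke $bs_0(f)\le s_0(f)\cdot bs_1(f)$ as a ``classical asymmetry inequality,'' but this is not a standard result, and it is strictly stronger than what the paper actually uses. What the paper cites is the Kenyon--Kutin lemma (Lemma~\ref{lemm:KK}), $bs_0(f)\le 2\bigl(C_1(f)-\tfrac12\bigr)s_0(f)$, which has $C_1(f)$ on the right, not $bs_1(f)$. Since $bs_1(f)\le C_1(f)$ and these two measures can be far apart, replacing $C_1$ by $bs_1$ is not a cosmetic change: the Kenyon--Kutin proof genuinely needs a $1$-certificate (it intersects every disjoint sensitive block at the worst $0$-input), and there is no analogous argument with a $bs_1$ witness, because a block-sensitivity witness lives at a single $1$-input and gives no control over blocks at the $0$-input $x$.

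Your sketched combinatorial proof of $bs_0(f)\le s_0(f)\,bs_1(f)$ has exactly this gap in the ``charging'' step. After singling out the at most $s_0(f)$ singleton blocks, you pass to $y_i=x\oplus B_i$ for each non-singleton $B_i$; it is true that every $j\in B_i$ is a sensitive bit of $y_i$, but the \emph{other} blocks $B_j$ ($j\neq i$) need not be sensitive at $y_i$: $f(y_i\oplus B_j)=f(x\oplus B_i\oplus B_j)$ is unconstrained. So the $y_i$'s are pairwise distinct $1$-inputs, each individually bounded by $bs_1(f)$, and there is no single $1$-input at which all the non-singleton blocks can be accounted for; ``charging with overhead $s_0(f)$'' does not produce the factor $bs_1(f)$ you want. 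The repair is straightforward and is what the paper does: replace your auxiliary inequality by Lemma~\ref{lemm:KK} and combine with Theorem~\ref{thm:C(f)} to get $bs_0(f)\le 2\bigl(2^{s_0(f)-1}s_1(f)-\tfrac12\bigr)s_0(f)\le 2^{s_0(f)}s_0(f)s_1(f)$, after which the rest of your argument goes through.
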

Hence, we see that our Theorems \ref{sens_deg1} and \ref{thm:C(f)} and their corollaries show an improved exponent in relation between sensitivity and various complexity measures of Boolean functions compared to the previous best bound shown in equation (\ref{eqn:best_previous_bound}). Beside being the first positive result toward the sensitivity conjecture since the work of Kenyon and Kutin from 2004,  we believe our results have the merit of introducing new ideas and techniques which could be valuable elesewhere as well as in the future works  on this fundamental conjecture.

Although the bounds obtained in Theorems \ref{sens_deg1} and Theorem \ref{thm:C(f)} look quite similar, the theorems do not follow one from one another by using the known
relations between certificate complexity and Fourier degree. On the contrary, the two theorems
are obtained by using rather different techniques. However, we shall note that despite their differences both proofs of Theorem \ref{sens_deg1} and \ref{thm:C(f)} crucially rely on the small set expansion properties of Boolean hypercube. It is plausible that better analytic estimates along the lines of \cite{Samor2007} could be useful to slightly improve our results--- though a significant improvement is likely to require new ideas.

\paragraph{Organization.} In Section~\ref{sec:preliminary} we recall some basic definitions and concepts relevant to this work. In Section \ref{sens_degree_section}, we prove Theorem~\ref{sens_deg1} and in Section~\ref{sec:thm1}, we prove Theorem \ref{thm:C(f)} and its corollaries. Both Sections~\ref{sens_degree_section} and~\ref{sec:thm1} are self-contained and can be read in any order.

\section{Preliminaries}\label{sec:preliminary}

In this paper, we work with total Boolean functions over the hypercube and their measures of complexity. or completeness, we briefly recall some basic definitions. For more information regarding the complexity measures and their relations we recommend the survey \cite{BW02}.

We work with the usual graph structure on the hypercube by connecting $x,y\in \{0,1\}^n$ if and only if $x,y$ differ in a single coordinate.
We always denote by $\log(\cdot)$ the logarithm with the base $2$.

\begin{Definition}\label{def:sensitivity}
The pointwise sensitivity $s(f,x)$ of a function $f$ on input $x$ is defined as the number of bits on which the function is sensitive, i.e.
\[ s(f,x)=\big|\{i\in[n]|f(x)\neq f(x^{(i)})\}\big|, \]
where $x^{(i)}$ is obtained by flipping the $i$-th bit of $x$. We define the total sensitivity by
\[s(f)=\max\big\{s(f,x)|x\in\{0,1\}^n\big\}\, ,\]
and the  0-sensitivity and 1-sensitivity by
\[ s_0(f)=\max\big\{s(f,x)|x\in\{0,1\}^n,f(x)=0\big\}, \quad s_1(f)=\max\big\{s(f,x)|x\in\{0,1\}^n,f(x)=1\big\}\, .\]
\end{Definition}
Block sensitivity is another important complexity measure which is obtained by relaxing the requirement that we have to only flip single coordinates by allowing flipping disjoint blocks. More formally block sensitivity is defined as follows:

\begin{Definition}
The pointwise block sensitivity $bs(f,x)$ of $f$ on input $x$ is defined as maximum number of pairwise disjoint subsets $B_1,...,B_k$ of $[n]$ such that $f(x) \neq f(x^{(B_i)})$ for all $i\in [k]$. Here $x^{(B_i)}$ is obtained by flipping all the bits $\{x_j|j\in B_i\}$ of $x$. Define the block sensitivity of $f$ as \[bs(f)=\max\big\{bs(f,x)|x\in\{0,1\}^n\big\},\] and
the 0-block sensitivity and 1-block sensitivity, analogously to Definition  \ref{def:sensitivity}, by
\[bs_0(f)=\max\big\{bs(f,x)|x\in\{0,1\}^n,f(x)=0\big\}, \quad bs_1(f)=\max\big\{bs(f,x)|x\in\{0,1\}^n,f(x)=1\big\}\, .\]
\end{Definition}
The certificate complexity is another very useful complexity measure with a more non-deterministic type of definition. It is defined as follows:

\begin{Definition}\label{cer_comp:def}
The  certificate complexity $C(f,x)$ of $f$ on input $x$ is defined as the minimum length of a partial assignment of $x$ such that $f$ is constant on this restriction. Define the certificate complexity of $f$ by \[C(f)=\max\big\{C(f,x)|x\in\{0,1\}^n\big\},\]
and
the 0-certificate and 1-certificate by  \[C_0(f)=\max\big\{C(f,x)|x\in\{0,1\}^n,f(x)=0\big\}, \quad C_1(f)=\max\big\{C(f,x)|x\in\{0,1\}^n,f(x)=1\big\}.\]
\end{Definition}

Another important notion for us is Fourier degree. It is also polynomially related to block-sensitivity and certificate complexity. To define Fourier degree, recall that any function $f:\{0,1\}^n\rightarrow \C$ can be expanded in terms of Fourier characters as follows
\[ f(x)=\sum_{S\subseteq [n]} \hat{f}(S) \, \chi_S(x)\,  , \]
where $\chi_S(x)=(-1)^{\sum_{i\in S} x_i}$.
\begin{definition} Let $f:\{-1,1\}^n \rightarrow \R$ and let $\hat{f}(\cdot)$ denote its Fourier transform. We define Fourier degree of $f$ by
\[ \deg(f)= \max_{\hat{f}(S)\neq 0} \,|S| \, .\]
\end{definition}
Finally, we mention an important and well-known combinatorial result over the hypercube, usually attributed to Harper \cite{Harper66}. 
\begin{Lemma}[Hamming cube isoperimetry \cite{Harper66}]\label{iso_theorem}
Assume $\emptyset \neq A\subseteq\{0,1\}^n$. Let $d$ be the average degree of vertices of $A$ with graph structure on $A$ induced from the natural Hamming graph of $\{0,1\}^n$. Then we have
\[ |A| \geq 2^{d}\, .\]
\end{Lemma}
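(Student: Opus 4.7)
The plan is to first reformulate the statement in a convenient form. If $e(A)$ denotes the number of edges of the Hamming graph both of whose endpoints lie in $A$, then the average degree is $d = 2e(A)/|A|$, so the desired inequality $|A| \geq 2^d$ is equivalent to the edge-isoperimetric inequality
\[ e(A) \le \tfrac{1}{2} |A| \log_2 |A|. \]
This is a classical fact; I would prove it via a short entropy argument using Han's inequality, which I find cleaner than Harper's original combinatorial compression argument.

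Let $\vec X = (X_1,\ldots,X_n)$ be uniformly distributed on $A$, so that $H(\vec X) = \log_2 |A|$. The key identity I would establish is
\[ \Pr[\vec X^{(i)} \in A] = H(X_i \mid X_{-i}) \]
for each coordinate $i$. To see this, observe that conditional on $X_{-i} = y$, the random variable $X_i$ is either a point mass (when exactly one of the two extensions of $y$ lies in $A$) or uniform on $\{0,1\}$ (when both extensions lie in $A$); the first case contributes $0$ to both sides of the identity, while the second contributes $1$ to both sides, and averaging yields the equality. Summing over $i$ and recognizing that $\sum_i \Pr[\vec X^{(i)} \in A]$ is precisely the average degree $d$, I obtain $d = \sum_{i=1}^n H(X_i \mid X_{-i})$. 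Then Han's inequality $\sum_i H(X_i \mid X_{-i}) \le H(\vec X)$ yields $d \le \log_2 |A|$, which is the claim.

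The only non-routine step is recognizing that the probability that bit-flipping keeps $\vec X$ inside $A$ coincides with the conditional entropy $H(X_i \mid X_{-i})$; everything else is bookkeeping. A purely combinatorial fallback would be induction on $n$: split $A$ along the last coordinate into $A_0, A_1 \subseteq \{0,1\}^{n-1}$, use the decomposition $e(A) = e(A_0) + e(A_1) + |A_0 \cap A_1|$, and reduce to the elementary inequality $a \log_2 a + b \log_2 b + 2 \min(a,b) \le (a+b)\log_2(a+b)$, which follows by a short convexity check after dividing through by $\min(a,b)$.
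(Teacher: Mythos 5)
Your entropy proof is correct, and it is a genuinely different route from the one the paper relies on. The paper states the lemma without proof and points to Bollob\'{a}s for the standard inductive/compression argument; your primary approach instead goes through Han's inequality. The pivotal observation that
\[
\Pr\bigl[\vec X^{(i)} \in A\bigr] \;=\; H\bigl(X_i \mid X_{-i}\bigr)
\]
(since, conditionally on $X_{-i}=y$, either exactly one extension of $y$ lies in $A$ and both sides are $0$, or both extensions lie in $A$ and both sides are $1$) is exactly right, and summing over $i$ gives $d = \sum_i H(X_i \mid X_{-i}) \le H(\vec X) = \log_2|A|$ by Han's inequality, which is the claim. What the entropy route buys you is a clean, one-step conceptual proof that transparently exhibits the lemma as an instance of subadditivity of entropy, and it generalizes readily to product spaces and to weighted versions. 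What the induction buys you---and this is your stated fallback, which matches the paper's pointer to Bollob\'{a}s---is a completely elementary, self-contained argument: split along the last coordinate into $A_0,A_1$, use $e(A) = e(A_0)+e(A_1)+|A_0\cap A_1|$, and reduce to $a\log_2 a + b\log_2 b + 2\min(a,b) \le (a+b)\log_2(a+b)$ (which, after writing $t=b/a$ with $a=\min(a,b)$, becomes $t\log_2 t + 2 \le (1+t)\log_2(1+t)$, true for $t\ge 1$ since equality holds at $t=1$ and the difference is decreasing). Both proofs are valid; for this paper's purposes either would serve, but your entropy version is arguably the cleaner one to present.
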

The above lemma is quite easy to prove by induction. For a detailed proof which covers the application to combinatorics, we recommend consulting the book by Bollob\'{a}s \cite{Bol}. 

The above theorem implies that if $|A|$ is small, the average degree $d$ must also be relatively small. In this case, the ratio between the number of the edges leaving the set $A$ and the total number of incident edges to $A$, which is equal to $1-d/n$, is relatively large. This justifies the alternative name given to the above theorem as the ``small set expansion" property of the Hamming cube.

In Section \ref{sec:thm1}, we need an equivalent formulation of discrete isoperimetric inequality, Lemma \ref{iso_theorem},  which will be a more convenient for our purposes there. 

\begin{Lemma}\label{lem:iso}
For any $A\subseteq \{0,1\}^n$, the edges between $A$ and $\bar{A}=\{0,1\}^n\setminus A$ is lower bounded by
$$|E(A,\bar{A})|\geq |A|(n-\log_2 |A|).$$
\end{Lemma}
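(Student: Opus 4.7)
The plan is to derive Lemma \ref{lem:iso} directly from the isoperimetric inequality stated in Lemma \ref{iso_theorem} (Harper's inequality), by using the fact that the Boolean hypercube is $n$-regular. The statement is trivial if $A = \emptyset$, so I assume $A$ is nonempty.

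First I would let $d$ denote the average degree in the subgraph of the Hamming cube induced on $A$. By Lemma \ref{iso_theorem}, we have $|A| \geq 2^d$, which after taking logarithms base $2$ gives the bound $d \leq \log_2 |A|$. This is the only place where the isoperimetric content actually enters the argument.

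Next I would count edges between $A$ and $\bar{A}$ by vertex-by-vertex bookkeeping. Since $\{0,1\}^n$ is $n$-regular, every vertex $v \in A$ is incident to exactly $n$ edges of the hypercube, of which some number $d_v$ stay inside $A$ and the rest cross to $\bar{A}$. Summing over $v \in A$,
\[
|E(A,\bar{A})| \;=\; \sum_{v \in A}(n - d_v) \;=\; n|A| - \sum_{v \in A} d_v \;=\; n|A| - d\,|A|,
\]
where I used that $\sum_{v \in A} d_v = d|A|$ by the definition of the average degree $d$ of the induced subgraph.

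Combining the two steps yields $|E(A,\bar{A})| = |A|(n - d) \geq |A|(n - \log_2 |A|)$, which is exactly the desired inequality. There is no real obstacle here: the lemma is essentially a restatement of Harper's inequality obtained by complementing "edges inside $A$" to "edges leaving $A$" using the regularity of the cube. The only mild care needed is to make sure the double-counting of induced edges is handled via the average degree definition, which is precisely how $d$ is defined in the statement of Lemma \ref{iso_theorem}.
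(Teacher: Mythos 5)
Your proof is correct and is essentially the derivation the paper has in mind: the paper states Lemma~\ref{lem:iso} as ``an equivalent formulation'' of Lemma~\ref{iso_theorem} without spelling out the translation, and your argument---using the $n$-regularity of the cube to convert the average-degree bound $d \le \log_2 |A|$ into the crossing-edge count $|E(A,\bar A)| = |A|(n-d)$---is exactly that translation. Nothing is missing.
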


\section{Sensitivity versus degree}\label{sens_degree_section}
In this section, we shall prove Theorem \ref{sens_deg1}. Let $f:\{0,1\}^n\rightarrow \{-1,1\}$ be a Boolean function. To prove Theorem \ref{sens_deg1}, the key idea is to count the following objects.

\begin{Definition} An $(l, r)$ S-triple consists of a point $x\in \{0,1\}^n$ and two sets $L\subseteq R\subseteq [n]$ with $|L|= l$ and $|R|=r$ such that $f(x)\neq f(x^i)$ for any $i\in L$.
\end{Definition}
In our application, the two parameters $l \leq r$ are chosen as follows: $l=c \log r$, for some $c>0$  an appropriately chosen constant, and $r$ will be a slowly growing function of $n$ which will be asymptotically $\log\log n$. The upper bound on the number of S-triples is easy to establish.
\begin{Lemma}\label{upper_lemma}
The number of $(l,r)$ S-triples is less than or equal to
\[2^{n} \frac{ {s(f)}^{l} \, n^{r-l}} {l ! (r-l)! }. \]
\end{Lemma}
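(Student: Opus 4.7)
The plan is to bound the count by fixing $x$ first and then counting the admissible pairs $(L,R)$. For each fixed $x \in \{0,1\}^n$, the condition $f(x) \neq f(x^i)$ for every $i \in L$ says exactly that $L$ is a subset of the set $S(x) \subseteq [n]$ of sensitive coordinates at $x$. By definition $|S(x)| = s(f,x) \leq s(f)$, so the number of eligible sets $L$ of size $l$ is at most $\binom{s(f)}{l}$.

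Next, given such an $L$, any admissible $R$ is obtained by choosing $r-l$ additional coordinates from $[n]\setminus L$, contributing at most $\binom{n-l}{r-l}$ further choices. So for each fixed $x$ the number of pairs $(L,R)$ making $(x,L,R)$ an $(l,r)$ S-triple is at most $\binom{s(f)}{l}\binom{n-l}{r-l}$.

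Summing over the $2^n$ possible $x$ and using the standard estimates $\binom{s(f)}{l} \leq s(f)^l/l!$ and $\binom{n-l}{r-l} \leq n^{r-l}/(r-l)!$ yields
\[
 \#\{(l,r)\text{ S-triples}\} \;\leq\; 2^n \binom{s(f)}{l}\binom{n-l}{r-l} \;\leq\; 2^n \cdot \frac{s(f)^l\, n^{r-l}}{l!\,(r-l)!},
\]
which is the stated bound.

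There is no real obstacle here: the argument is a routine double-count, and the only thing to be careful about is choosing the order (fix $x$ first, then $L$ using the sensitivity constraint, then $R$ as an extension of $L$) so that the sensitivity bound $s(f)$ is applied in the $L$-step rather than the $R$-step. The real work will come later, when this upper bound is compared against a lower bound on the number of S-triples derived from Fourier-analytic/isoperimetric considerations to extract the relation between $\deg(f)$ and $s(f)$ in Theorem~\ref{sens_deg1}.
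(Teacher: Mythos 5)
Your proof is correct and follows essentially the same path as the paper's: fix $x$, observe that $L$ must lie inside the sensitive coordinates of $x$ (at most $\binom{s(f)}{l}$ choices), extend to $R$ with at most $\binom{n-l}{r-l}$ choices, sum over the $2^n$ inputs, and apply the standard binomial estimates. No gaps.
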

\begin{proof} We can assume $s(f)\geq l$ as otherwise the number of S-triples is zero. Consider any $x\in \{0,1\}^n$. The number of S-triples involving $x$ is bound by $\max_{1\leq q\leq s(f)} \binom{q}{l} \binom{n-l}{ r-l}$. This is clearly bounded by $\frac{ {s(f)}^{l} \, n^{r-l}} {l ! (r-l)! }$ which implies the above lemma.
\end{proof}
Now we are in a position to layout the structure of the proof.

\subsection{The overall structure of the proof}
The technical of proving Theorem \ref{sens_deg1} is to prove a lower bound on the number of S-triples which coupled with the above lemma gives the desired lower bound on $s(f)$. To do so, we shall need the following two facts:

\begin{enumerate}
\item A \textbf{weak bound} for $s(f)$ versus $\deg(f)$. For example it follows from the work of Kenyon and Kutin that $\deg(f)\leq 10^{s(f)+1}$. 
\item \textbf{Hypercube isoperimetric inequality} as in Lemma \ref{lem:iso}.
\end{enumerate}
Briefly, the plan is to use the isoperimetric inequality to \emph{boost} the weak bound to our desired bound of $\deg(f)^{1-o(1)} \leq 2^{s(f)}$. The key steps of the arguments are as follows.
\begin{enumerate}
\item[A.] We consider the restriction of the functions $f$ to the subcubes of dimension $r$. For any such restriction, we show the existence of a $(l,r)$ S-tripes consistent with that restriction by applying the weak bound. The precise dependence of $l$ on $r$ is simply dictated by the known weak bound we use. 
\item[B.] We use isoperimetric inequality to show that the existence of a single S-triple consistent with a particular restriction $z$ immediately gives rise to \emph{many} S-triples consistent with the same restriction $z$.
\item[C.] This provides for us a lower bound on the number of S-triples which combined with Lemma \ref{upper_lemma} gives us the desired result.
\end{enumerate}

In order to carry out the argument, we will need a few definition regarding restrictions of functions over the discrete cube which we now present.

\subsection{Restrictions of Boolean functions}
\begin{Definition} Given $z\in \{0,1,*\}^n$ and $R\subseteq [n]$, we call them a \emph{compatible pair} if $R=\{i\in[n] \, :z(i)=*\}$. Each $z\in \{0,1,*\}^n$  naturally corresponds to $|R|$-dimensional subcube $Q_z\subseteq \{0,1\}^n$ defined as follows: 
\[ Q_z= \{ y\in \{0,1\}^n \, : \, z_i \neq * \Rightarrow y_i=z_i \}, \]
i.e. $Q_z$ is constructed by freezing the coordinates of $y$ in $[n]\setminus R$ according to $z$, and letting the rest of coordinates $y_i$ for $i\in R$ to be free.
\end{Definition}
Let  $f:\{0,1\}^n\rightarrow \R$. Given a compatible pair $z\in\{0,1,*\}^n$ and $R=\{ i\in [n] \, : \, z(i)=* \}$ we obtain a restriction function $f|_z$ given by restricting $f$ to $Q_z$
\begin{Definition} Given $z\in \{0,1,*\}^n$ and $x\in \{0,1\}^n$ (here $x$ is not necessarily in $Q_z$), define $y=(x \downarrow z)$ to be projection of $x$ to $Q_z$ given by $y_i= z_i$ for any $i\in [n]$ such that $z(i)\neq *$ and $y_i=x_i$ for all the other $i\in [n]$. We define
\[ f|_z (x) = f(x\downarrow z)\, . \]
\end{Definition}
Notice that $f|_z(x)$ is a function over whole $\{0,1\}^n$ though its value only depends on $R$ the coordinates which $z(i)=*$. Given the above definition one can easily infer the Fourier expansion of the restriction function $f|_z(\cdot)$ from that of $f$ as follows.
\[ \left(f|_z\right)(x)= \, \sum_{S\subseteq R} \chi_S(x) \sum_{U\cap R= S} \hat{f}(U) \chi_{U\backslash S}(z)\, . \]
We need the following lemma regarding the degree of restrictions of a function.
\begin{Lemma} \label{restriction_lemma} Let $f:\{-1,1\}^n\rightarrow\{-1,1\}$ be function of degree $n$. Let $R\subseteq [n]$. Then there exist $z\in \{-1,1,*\}^n$ compatible with $R$ such that $\left(f|_z\right)$ is also full-degree $|R|$.
\end{Lemma}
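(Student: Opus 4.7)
The plan is to apply the explicit Fourier formula for restrictions which is stated just before the lemma, and observe that the top-degree Fourier coefficient of $f|_z$ (the one indexed by the set $R$) is itself a non-vanishing multilinear polynomial in the frozen variables $z_i$, $i\notin R$. Concretely, specializing the displayed identity
\[ \left(f|_z\right)(x)= \sum_{S\subseteq R} \chi_S(x) \sum_{U\cap R= S} \hat{f}(U)\, \chi_{U\backslash S}(z) \]
to $S=R$, the Fourier coefficient of $\chi_R$ in $f|_z$ equals
\[ \alpha(z)\;\eqdef\; \sum_{U\supseteq R} \hat{f}(U)\, \chi_{U\backslash R}(z)\, . \]
To prove the lemma it suffices to exhibit $z\in\{-1,1,*\}^n$ compatible with $R$ for which $\alpha(z)\neq 0$, because then $\left(f|_z\right)$ has a nonzero Fourier coefficient at the maximal set $R\subseteq R$, and therefore has Fourier degree exactly $|R|$.

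Now I would view $\alpha$ as a function of the coordinates of $z$ indexed by $[n]\setminus R$ (the $*$-coordinates of $z$ play no role in $\alpha(z)$, so we may substitute any values there). Regarded this way, $\alpha$ is already written in its Fourier expansion on the cube $\{-1,1\}^{[n]\setminus R}$:
\[ \alpha(z)=\sum_{T\subseteq [n]\setminus R} \hat{f}(R\cup T)\,\chi_T(z)\, . \]
The Fourier coefficient of $\alpha$ corresponding to $T=[n]\setminus R$ is exactly $\hat{f}([n])$. By hypothesis $\deg(f)=n$, so $\hat{f}([n])\neq 0$; hence $\alpha$ is not the zero function on $\{-1,1\}^{[n]\setminus R}$, and a suitable $z$ exists.

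This essentially finishes the argument: given such a $z$ (with stars placed on $R$ and with the bits outside $R$ set to a point where $\alpha$ does not vanish), the restriction $f|_z$ has a nonzero Fourier coefficient at $R$, so $\deg(f|_z)=|R|$. There is no real obstacle here; the only thing to be mindful of is the bookkeeping between the $\{0,1\}$ and $\{-1,1\}$ conventions used for the cube and the fact that the $*$-coordinates of $z$ do not enter the formula for $\alpha$, so only the frozen bits need to be chosen carefully.
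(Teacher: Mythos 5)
Your proposal is correct and is essentially the same argument as the paper's: both identify $\alpha(z)=\sum_{U\supseteq R}\hat f(U)\chi_{U\setminus R}(z)$ as the top Fourier coefficient of $f|_z$ and deduce from $\hat f([n])\neq 0$ that $\alpha$ is not identically zero. The only cosmetic difference is that the paper establishes this by computing $\Ex_z[\alpha(z)^2]=\sum_{U\supseteq R}\hat f(U)^2>0$ via Parseval, whereas you read off directly that $\alpha$ has the nonzero Fourier coefficient $\hat f([n])$; the two observations are equivalent.
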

\begin{proof}
The coefficient of the highest monomial in Fourier expansion of $\left(f|_z\right)$ is given by
\[ \sum_{R\subseteq U} \hat{f}(U) \chi_{U\backslash R} (z). \]
Now we calculate the expectation of the square of this value for a random $z$ compatible with $R$.
\[ \Ex_{z} \left(\sum_{R\subseteq U} \hat{f}(U) \chi_{U\backslash R} (z) \right)^2= \sum_{R\subseteq U}\hat{f}(U) ^2\geq \hat{f}([n])^2 >0  \]
where for the last inequality we used the fact that $f$ is full-degree.
 \end{proof}

The importance of the above lemma is that it allows us to apply the weak bound in line with our \emph{boosting} strategy: Fix some $R\subseteq[n]$. By the lemma above, there exists $z\in\{0,1,*\}^n$ compatible with $R$ such that $f|_{z}$ is full-degree. The importance of existence of $z$ is that $Q_{z}$ always contain an S-triple which was the object we were interested to count. More precisely, since $f|_{z}$ is full-degree by induction on the degree in Theorem \ref{sens_deg1} there exists subset $L\subseteq R$ with $|L| \geq \frac{1}{4}\log |R|$ such that there exist $x\in Q_{z}$ such that $f|_{z}(x)\neq f|_{z}(x^i)$ for every $i\in L$. Taking $l=|L|$ and $r=|R|$ we see that $(x, L, R)$ constitutes an S-triple. We use the existence of $z$ and Harper's lemma \ref{iso_theorem} to prove that for every $R$ there exists not only one such $z$ but in fact many. This is the key estimate we need to prove our result.
\subsection{The main proof of sensitivity versus Fourier degree estimate}

\begin{proof}[Proof of Theorem \ref{sens_deg1}]
Without loss of generality we can assume $f$ is full-degree. If this is not the case, choose $S\subseteq [n]$ with $|S|=\deg(f)$  such that $\hat{f}(S)\neq 0$, then set the coordinates outside $S$ arbitrarily to get a Boolean function on the $|S|$-dimensional hypercube with full-degree. It is enough to prove the statement for this restriction of the original $f$ as  restricting a function can only decrease the sensitivity.

Let $r=\omega(1)$ be a very slowly growing function of $n$ to be specified later. Fix a set $R\subseteq [n]$ with $|R|=r$. By Lemma \ref{restriction_lemma} there exist $z\in \{0,1,*\}^n$ compatible with $R$ such that the restricted function $\left(f|_{z}\right)$ has degree $r$. Now by induction $s(f|_{z})\geq l$ where $l=\Theta(\log r)$. (we can take $l=\frac{1}{3}\log r$ for concreteness.)  This means we can find a point $x\in Q_{z}$ with $l$ neighbors $x_1, x_2, \ldots, x_{l}$ such that
\[ (f|_{z})(x_1)= (f|_{z})(x_2)=\ldots = (f|_{z})(x_l)\neq (f|_{z})(x) \, .\]
Let $L=\{i_1, i_2, \ldots, i_l\} \subseteq R$ be the direction of the edges $(x,x_1), (x,x_2), \ldots, (x,x_l)$ respectively. Then $(x, L,R)$ constitutes a $(l,r)$ S-triple.

So far for any $R\subseteq[n]$ we have shown the existence of one such S-triple. Now we show there are many such triples. Consider $Z_R$ which is the set of all $z\in \{0,1,*\}^n$ compatible with $R$. Notice that $Z_R$ can be naturally associated with a $(n-r)$-hypercube with $z_1,z_2\in Z_R$ said to be \textit{neighbors} in direction $j\in [n]\setminus R$ if  $z_1(i)=z_2(i)$ for $i\in [n]\setminus \{j\}$ and $z_1(j)\neq z_2(j)$. (Clearly $z_1(j)\neq *$ and $z_2(j)\neq *$ as both $z_1$ and $z_2$ are compatible with $R$. )

We call a $\wtilde{z}\in Z_R$ \emph{good} if
\[ (f|_{\wtilde{z}})(x_1)= (f|_{\wtilde{z}})(x_2)=\ldots = (f|_{\wtilde{z}})(x_l)\neq (f|_{\wtilde{z}})(x) \, .\]
Let $A$ be the set of all good $\wtilde{z}$ in $Z_R$. Notice that if $\wtilde{z}$ is good, $(( x \downarrow \wtilde{z}), L,R)$ constitutes an S-triple. We have so far shown that $z\in A$, so $A$ is non-empty. Now we prove all elements of $A$ have high degree when seen as a subset of $(n-r)$-hypercube. Indeed, notice that for any $\bar{z}\in Z_R$ and any $\bar{x}$, there are at most $s(f)$ directions $j \in [n]\setminus R$ such that $\left(f|_{\bar{z}}\right)(\bar{x}^{(j)}) \neq  \left(f|_{\bar{z}}\right)(\bar{x})$. Applying the same reasoning to all $x, x_1, x_2, \ldots, x_{l}$, we see that for any $z\in A$ there is at least $n-r-s(f)(l+1)$ neighbors of $z$ in $A$. Now applying our isoperimetric inequality (Lemma \ref{iso_theorem}) to $A$ we see that there are at least $2^{n-r-(l+1)s(f)}$ such special triples for a fixed $R\subseteq[n]$ of size $r$.

On the other hand, the number of such special triples is bounded from the above by Lemma \ref{upper_lemma}. Thus, \[ \binom{n}{r}2^{n-r-s(f)(l+1)}\leq 2^{n} \frac{ {s(f)}^{l} \, n^{r-l}} {l ! (r-l)! }. \]

As $r\ll n$ we have $\binom{n}{r}\geq \frac{n^r}{2^{r} r!}$. Simplifying we see $n^{\frac{l}{l+1}} \leq 4^r \binom{r}{l} \, s(f) 2^{s(f)}$. Choosing $r\log r= \log n$ and $l=\frac{\log r}{3}$ we get
\[ n^{1- O\left(\frac{1}{\log\log n}\right)} \leq s(f) 2^{s(f)} \, ,\]
which is our desired result.
 \end{proof}

\section{Sensitivity versus certificate complexity }\label{sec:thm1}

In this section we prove Theorem~\ref{thm:C(f)}. Actually, we prove a slightly stronger result.

\begin{Theorem}\label{thm:C(f)-2}
Let $f:\{0,1\}^n\rightarrow\{0,1\}$ be a Boolean function, then
$$
C_1(f)\leq 2^{s_0(f) - 1} s_1(f) - (s_0(f) - 1),\ C_0(f)\leq 2^{s_1(f) - 1} s_0(f) - (s_1(f) - 1).\footnote{If $s_0(f)=0$ or $s_1(f)=0$, then $f$ is constant, hence $s(f)=bs(f)=C(f)=0$.}
$$
\end{Theorem}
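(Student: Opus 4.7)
The plan is to prove the inequality for $C_1(f)$ by induction on $s_0(f)$, using Lemma~\ref{lem:iso} as the main non-elementary tool; the bound for $C_0(f)$ then follows by the $0 \leftrightarrow 1$ symmetry of the claim. For the base case $s_0(f) = 1$, I would argue that the sensitive set $V(x)$ of any $1$-input $x$ is itself a $1$-certificate, giving $C_1(f) \leq s_1(f) = 2^{0} s_1(f) - 0$. Suppose otherwise, and let $y$ be a $0$-input with $y|_{V(x)} = x|_{V(x)}$ minimizing $d(x,y)$; then $d(x,y) \geq 2$ because flipping any coordinate of $x$ outside $V(x)$ yields another $1$-input, and by minimality every coordinate in which $x$ and $y$ differ must be sensitive at $y$ (otherwise flipping it at $y$ would produce a strictly closer $0$-input still consistent with $x$ on $V(x)$). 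This forces $s(f,y) \geq d(x,y) \geq 2$, contradicting $s_0(f) = 1$.

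For the inductive step with $s_0(f) = k \geq 2$, I would fix a $1$-input $x$, set $V = V(x)$, and restrict $f$ to $Q = \{z : z|_V = x|_V\}$ to obtain a sub-function $g$. If $g$ is constant, then $V$ is a $1$-certificate of size $\leq s_1(f)$ and we are done; otherwise the closest-point argument applied inside $Q$ produces a $0$-input $y^\ast \in Q$ whose disagreement set with $x$ lies in $V(y^\ast) \cap \overline{V}$ and has size at least $2$. The next step is to identify a small set $W \subseteq \overline{V}$ so that the further restriction $g^\ast$ fixing $z|_W = x|_W$ has $s_0(g^\ast) \leq k-1$: namely, let $B = \{y \in Q : f(y)=0,\ V(y) \cap V = \emptyset\}$ be the ``pathological'' $0$-inputs whose sensitive sets avoid $V$, and pick $W$ to be a transversal of the family $\{V(y) : y \in B\}$; every $0$-input in $Q \setminus B$ already drops a sensitive coordinate under the $V$-restriction, and those in $B$ are eliminated by $W$. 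Here Lemma~\ref{lem:iso} intervenes: it bounds the edge boundary of $B$ inside $Q$ and thereby forces $W$ to be small. With $s_0(g^\ast) \leq k-1$ and $s_1(g^\ast) \leq s_1(f)$, the induction hypothesis gives $C_1(g^\ast) \leq 2^{k-2}s_1(f) - (k-2)$, and adjoining $V \cup W$ to a minimum $1$-certificate of $g^\ast$ at the appropriate restricted point gives a $1$-certificate for $x$ of size at most $|V| + |W| + C_1(g^\ast)$. A careful arithmetic reconciliation, exploiting the overlap between $V$ and the coordinates forced into the certificate of $g^\ast$, is what finally delivers the target bound $2^{k-1}s_1(f) - (k-1)$.

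The main obstacle I expect is the transversal construction in the inductive step: a priori, the sensitive sets $V(y)$ for $y \in B$ could be pairwise disjoint, making any transversal $W$ too large and breaking the arithmetic. Lemma~\ref{lem:iso} is brought in precisely to rule this out by lower-bounding the boundary of $B$ in $Q$, which forces significant overlap among the $V(y)$'s and thus keeps $|W|$ small. The extra $-(s_0(f)-1)$ refinement over Theorem~\ref{thm:C(f)} is then squeezed out by tight bookkeeping of which coordinates do double duty between $V$ and the certificate produced by the induction hypothesis, saving one coordinate per level of recursion.
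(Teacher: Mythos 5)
Your base case $s_0(f)=1 \Rightarrow C_1(f)\le s_1(f)$ is correct and the closest-$0$-input argument is a nice observation. But the inductive step cannot close, for two independent reasons.

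\textbf{The arithmetic fails already at $k=2$.} You propose $C(f,x)\le |V|+|W|+C_1(g^\ast)$ with $s_0(g^\ast)\le k-1$, so the induction hypothesis gives $C_1(g^\ast)\le 2^{k-2}s_1(f)-(k-2)$. Plugging in, the target $2^{k-1}s_1(f)-(k-1)$ requires $|V|+|W|\le 2^{k-2}s_1(f)-1$. Since $|V|\le s_1(f)$ is the only control on $|V|$, at $k=2$ this forces $|W|\le -1$ even when $W$ is empty. Your proposed rescue --- ``overlap between $V$ and the coordinates forced into the certificate of $g^\ast$'' --- does not exist: by construction the certificate of $g^\ast$ is a set of coordinates \emph{inside} the free coordinates of $Q^\ast$, which are disjoint from $V\cup W$. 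There is nothing to overlap.

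\textbf{Lemma~\ref{lem:iso} does not bound a transversal.} The isoperimetric inequality controls the number of edges on the boundary of $B$, not the intersection pattern of the family $\{V(y):y\in B\}$. Nothing in the edge-boundary bound prevents the sensitive sets of the pathological $0$-inputs from being pairwise disjoint, in which case any hitting set $W$ must have size $\ge|B|$. The lemma would have to be doing a job it simply cannot do.

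The paper's actual proof is not an induction at all. It fixes a minimum $1$-certificate of length $m=C_1(f)$, considers the corresponding $(n-m)$-subcube $Q_0$ and its $m$ neighboring subcubes $Q_1,\dots,Q_m$, and \emph{double-counts} the total sensitivity $S=\sum_i\sum_{x\in Q_i}s(f,x)$. It lower-bounds $S$ by decomposing into edges back to $Q_0$, edges internal to each $Q_i$ (where Lemma~\ref{lem:iso} enters, applied inside each $Q_i$ to the set $A_i$ of $0$-inputs), and edges to the level-$2$ slice; it upper-bounds $S$ pointwise by $s_0(f),s_1(f)$. A second use of isoperimetry (Lemma~\ref{lem:Ai}) lower-bounds $|A_i|$, and convexity of $x\log_2 x$ converts the resulting inequality into the stated bound on $m$. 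This is a single global inequality, and the $-(s_0(f)-1)$ refinement falls out of the convexity step rather than from any per-level savings. You would need to switch to something like this global counting to make progress.
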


\proof
By symmetry we only need to prove $C_1(f)\leq 2^{s_0(f) - 1} s_1(f) - (s_0(f) - 1)$.
 Without the loss of generality, we assume $C_1(f)=C(f,0^n)$, i.e. the
1-certificate complexity is achieved on the input $0^n$. We have $f(0^n)=1$. We assume that the minimal certificate of $0^n$
consists of $x_1=0, x_2=0,\ldots,x_m=0$, where $m=C(f,0^n)=C_1(f)$.

Let $Q_0$ be the set of inputs $x$ that satisfies $x_1 =x_2=\ldots=x_{m} = 0$. Since $x_1=0, x_2=0,\ldots,x_m=0$ is a 1-certificate, we have $\forall~x\in Q_0$, $f(x)=1$.

For each $i \in [m]$, let $Q_i$ be the set of inputs $x$ with $x_1 =\ldots=x_{i-1} = x_{i+1} = \ldots= x_m = 0$ and $x_i = 1$. Let $S$ be the total sensitivity of all inputs $x \in \bigcup_{i=1}^m Q_i$. It consists of three parts: sensitivity between $Q_i$ and $Q_0$~(denoted by $S_1$), sensitivity inside $Q_i$~(denoted by $S_2$) and sensitivity between $Q_i$ and other input~(denoted by $S_3$), i.e.
\begin{equation}
S = \sum_{i=1}^{m}\sum_{x \in Q_i} s(f,x) = S_1 + S_2 + S_3.
\end{equation}

In the following we estimate $S_1,S_2$ and $S_3$ separately. We use $A_1, \ldots, A_m$ to denote the set of $0$-inputs in $Q_1, \ldots, Q_m$, respectively, i.e. $A_i=\{x\in Q_i|f(x)=0\}$~($i\in [m]$). Since $x_1=\ldots=x_m=0$ is the minimal certificate, i.e. $Q_0$ is maximal, thus $A_1, \ldots, A_m$ are all nonempty.

We also need the following lemma which follows from Lemma \ref{lem:iso} but can be
also proven without using it \cite{Sim}:
\begin{Lemma}\label{lem:Ai}
For any $i\in [m]$, $$|A_i| \geq 2^{n-m-s_0(f)+1}.$$
\end{Lemma}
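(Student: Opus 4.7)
The plan is to apply the Hamming-cube isoperimetric inequality (Lemma~\ref{iso_theorem}) to $A_i$ viewed as a subset of the $(n-m)$-dimensional subcube $Q_i$, whose free coordinates are $\{m+1,\ldots,n\}$. What I need is a uniform lower bound on the degree of every $x\in A_i$ inside the subgraph of the Hamming graph induced on $A_i$.

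The key observation is that for every $x\in A_i$ the coordinate $i$ is automatically a sensitive direction of $x$: flipping $x_i$ from $1$ to $0$ yields a point of $Q_0$, on which $f$ equals $1$, while $f(x)=0$. Since $x$ is a $0$-input, its total sensitivity is at most $s_0(f)$, so at most $s_0(f)-1$ of its sensitive coordinates can lie among the free directions $\{m+1,\ldots,n\}$. For each free direction $j$ that is \emph{not} sensitive at $x$, the neighbor $x^{(j)}$ still lies in $Q_i$ and satisfies $f(x^{(j)})=f(x)=0$, hence $x^{(j)}\in A_i$. Consequently $x$ has at least $(n-m)-(s_0(f)-1)=n-m-s_0(f)+1$ neighbors inside $A_i$.

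Since this bound holds at every $x\in A_i$, the induced subgraph on $A_i$ has average degree at least $n-m-s_0(f)+1$. Identifying $Q_i$ with $\{0,1\}^{n-m}$ and invoking Lemma~\ref{iso_theorem} (which applies because $A_i$ is nonempty by the minimality of the certificate) then yields $|A_i|\geq 2^{n-m-s_0(f)+1}$, which is exactly the claimed bound. I do not foresee a real obstacle here; the only point that requires care is preserving the ``$+1$'' in the exponent, and that is precisely what the forced sensitivity along coordinate $i$ supplies.
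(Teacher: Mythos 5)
Your proof is correct and is essentially the argument the paper intends: it isolates the fact that coordinate $i$ is automatically sensitive for every $x\in A_i$ (so at most $s_0(f)-1$ of the $n-m$ free directions can be sensitive), applies Harper's inequality to $A_i$ as a subset of the $(n-m)$-cube $Q_i$, and notes nonemptiness from minimality of the certificate. The paper states the lemma without proof, only remarking that it follows from Lemma~\ref{lem:iso}; your argument uses the equivalent Lemma~\ref{iso_theorem} and fills in the gap exactly as the authors indicate.
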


The sensitivity between $Q_i$ and $Q_0$ is $|A_i|$. By summing over all possible $i$ we get:
\begin{equation}\label{eq5.3}
S_1=\sum_{i=1}^m |A_i|.
\end{equation}

{\bf Sensitivity inside $Q_1, \ldots, Q_m$:}
By Lemma~\ref{lem:iso}, for each $i\in [m]$, the number of edges between $A_i$ and $Q_i\setminus A_i$ is bounded by:
\begin{eqnarray*}
|E(A_i, Q_i\setminus A_i)| \geq  |A_i|(\log_2 |Q_i| - \log_2 |A_i|)=|A_i|(n-m-\log_2 |A_i|).
\end{eqnarray*}
Therefore,
\begin{eqnarray}
S_2&=& 2\sum_{i=1}^m |E(A_i, Q_i\setminus A_i)|\nonumber\\
&\geq &2\sum_{i=1}^m |A_i|(n-m - \log_2|A_i|).\label{eq5.4}
\end{eqnarray}

{\bf Sensitivity between $Q_i$ and other inputs (i.e.~$\{0,1\}^n\setminus \bigcup_{j=0}^{m}Q_j$):}
For each $1 \leq i < j \leq m$, let $Q_{i,j}$ be the set of inputs (not in $Q_0$) that are {\em adjacent} to both $Q_i$ and $Q_j$, i.e. $Q_{i,j}$ is the set of inputs $x$ that satisfy $x_1 =\ldots x_{i-1}=x_{i+1}=\ldots x_{j-1}=x_{j+1}=\ldots x_m = 0$ and $x_i = x_j = 1$. The sensitivity between $Q_i, Q_j$ and $Q_{i,j}$ is lower bounded by
\begin{eqnarray*}
\sum_{x \in Q_0} |f(x+e_i) - f(x+e_j)|.
\end{eqnarray*}
where $e_i$ is the unit vector with the $i$-th coordinate equal to 1 and all other coordinates equal to 0. Then,
$x+e_i$, $x+e_j$ are the neighbors of $x$ in $Q_i$ and $Q_j$, respectively. Summing over all possible pairs of $(i,j)$ we get
\begin{eqnarray}
\nonumber
S_3&\geq& \sum_{1\leq i < j\leq m} \sum_{x\in Q_0} |f(x+e_i) - f(x+e_j)|\\
& =& \sum_{x\in Q_0} \left(\sum_{i=1}^m f(x+e_i)\right)\left(m - \sum_{i=1}^m f(x+e_i)\right)\nonumber\\
&=& \sum_{x\in Q_0} s(f,x)(m-s(f,x)).\label{eq5.5}
\end{eqnarray}

\vskip10pt
If we combine inequalities~(\ref{eq5.3})-(\ref{eq5.5}), we get
\begin{eqnarray}
S &=& \sum_{i=1}^{m}\sum_{x\in Q_i}s(f,x)\nonumber\\
 &\geq& \sum_{i=1}^m |A_i| + 2\sum_{i=1}^m |A_i|(n-m - \log_2|A_i|)+ \sum_{x\in Q_0} s(f,x)(m - s(f,x)).\label{eq5.6}
\end{eqnarray}
Since $s(f,x)$ is upper bounded by $s_0(f)$ or $s_1(f)$ (depending
on whether $f(x)=0$ or $f(x)=1$), we have
\begin{eqnarray*}
\sum_{x\in Q_i}s(f,x)&\leq& |A_i|s_0(f) + (|Q_i| - |A_i|)s_1(f)\\
&=&|A_i|s_0(f) + (2^{n-m} - |A_i|)s_1(f)
\end{eqnarray*}
Thus,
\begin{eqnarray}
S=\sum_{i=1}^{m}\sum_{x\in Q_i}s(f,x) \leq \sum_{i=1}^m \Big(|A_i|s_0(f) + (2^{n-m} - |A_i|)s_1(f)\Big).\label{eq7}
\end{eqnarray}
We use $w$ to denote the total number of $0$-inputs in $Q_1, \ldots, Q_m$. Then,
\begin{eqnarray*}
w = \sum_{i=1}^m |A_i| = \sum_{x \in Q_0} s(f,x).
\end{eqnarray*}
The inequality~(\ref{eq7}) can be rewritten as
\begin{equation}
S \leq  w\cdot s_0(f) + (m \cdot 2^{n-m} -w)s_1(f).\label{eq5.7}
\end{equation}
Also, $s(f,x)\leq s_1(f)$ for each $x\in Q_0$.
Thus, the right-hand side of inequality~(\ref{eq5.6}) is
\begin{eqnarray}
\nonumber
&&\sum_{i=1}^m |A_i| + 2\sum_{i=1}^m |A_i|(n-m - \log_2|A_i|) + \sum_{x\in Q_0} s(f,x)(m - s(f,x))\\
\nonumber
 &\geq& w + 2\sum_{i=1}^m |A_i|(n-m - \log_2|A_i|) + (m - s_1(f))\sum_{x\in Q_0} s(f,x)\\
\nonumber
&= & w + 2w(n-m)-2\sum_{i=1}^m |A_i|\log_2|A_i| + (m - s_1(f))w\\
&= & w(1+2n-m - s_1(f)) - 2\sum_{i=1}^m |A_i|\log_2|A_i|.\label{eq5.8}
\end{eqnarray}
By combining inequalities~(\ref{eq5.6})-(\ref{eq5.8}) we get
\begin{eqnarray*}
w(1+2n-m - s_1(f)) - 2\sum_{i=1}^m |A_i|\log_2|A_i|
\leq w\cdot s_0(f)+(m\cdot 2^{n-m}-w)s_1(f).
\end{eqnarray*}
By rearranging the inequality we get
\begin{equation}
w(1 + 2n-m  - s_0(f)) \leq 2\sum_{i=1}^m |A_i| \log_2 |A_i| + m\cdot 2^{n-m} s_1(f).\label{eq8}
\end{equation}

Since the function $g(x)=x \log_2 x$ is convex and we know that
$|A_i|\leq |Q_i|=2^{n-m}$, from Lemma~\ref{lem:Ai} $|A_i|\geq 2^{n-m-s_0(f)+1}$.
Therefore,
\begin{eqnarray*}
g(|A_i|)&=&g\left(\frac{|A_i|-2^{n-m-s_0(f)+1}}{2^{n-m}-2^{n-m-s_0(f)+1}}\cdot 2^{n-m}+\frac{2^{n-m}-|A_i|}{2^{n-m}-2^{n-m-s_0(f)+1}}\cdot 2^{n-m-s_0(f)+1}\right)\\
&\leq&
\frac{|A_i|-2^{n-m-s_0(f)+1}}{2^{n-m}-2^{n-m-s_0(f)+1}}\cdot g(2^{n-m})+\frac{2^{n-m}-|A_i|}{2^{n-m}-2^{n-m-s_0(f)+1}}\cdot g(2^{n-m-s_0(f)+1})\\
&=&\frac{|A_i|-2^{n-m-s_0(f)+1}}{2^{n-m}-2^{n-m-s_0(f)+1}}\cdot 2^{n-m}(n-m)\\
&& +\frac{2^{n-m}-|A_i|}{2^{n-m}-2^{n-m-s_0(f)+1}}\cdot 2^{n-m-s_0(f)+1}(n-m-s_0(f)+1)\\
&=&\frac{|A_i|-2^{n-m-s_0(f)+1}}{2^{s_0(f)-1}-1}\cdot 2^{s_0(f)-1}(n-m) +\frac{2^{n-m}-|A_i|}{2^{s_0(f)-1}-1}(n-m-s_0(f)+1)\\
&=&\left(\frac{|A_i|-2^{n-m-s_0(f)+1}}{2^{s_0(f)-1}-1} 2^{s_0(f)-1}+\frac{2^{n-m}-|A_i|}{2^{s_0(f)-1}-1}\right)(n-m) -\frac{2^{n-m}-|A_i|}{2^{s_0(f)-1}-1}(s_0(f)-1)\\
&=&|A_i|(n-m) -\frac{2^{n-m}-|A_i|}{2^{s_0(f)-1}-1}(s_0(f)-1).
\end{eqnarray*}
Hence
\begin{eqnarray}
\nonumber \sum_{i=1}^m |A_i| \log_2|A_i|&=&\sum_{i=1}^{m}g(|A_i|)\\
\nonumber & \leq & \sum_{i=1}^m \left(|A_i|(n-m) - \frac{2^{n-m} - |A_i|}{2^{s_0(f) - 1} - 1}(s_0(f) - 1)\right)\\
&=& w(n-m+\frac{s_0(f)-1}{2^{s_0(f)-1}-1}) - m\cdot 2^{n-m}\frac{s_0(f) - 1}{2^{s_0(f) - 1} - 1}.
\label{eq9}
\end{eqnarray}
By combining inequalities~(\ref{eq8}) and (\ref{eq9}), we get
\begin{eqnarray*}
&& w(1 + 2n-m - s_0(f))\\
&& \leq 2\left(w(n-m+\frac{s_0(f)-1}{2^{s_0(f)-1}-1}) - m\cdot 2^{n-m}\frac{s_0(f) - 1}{2^{s_0(f) - 1} - 1}\right)  + m\cdot 2^{n-m} s_1(f).
\end{eqnarray*}
It implies that
\begin{eqnarray*}
w\left(1 + m - s_0(f) - \frac{2(s_0(f) - 1)}{2^{s_0(f)-1} - 1}\right) \leq m\cdot 2^{n-m} \left(s_1(f) - \frac{2(s_0(f) - 1)}{2^{s_0(f)-1} - 1}\right),
\end{eqnarray*}
Substituting $w =\sum_{i=1}^{m} |A_i|\geq m \cdot 2^{n-m-s_0(f)+1}$, we get
\begin{eqnarray*}
m\cdot 2^{n-m-s_0(f)+1} \left(1 + m - s_0(f) - \frac{2(s_0(f) - 1)}{2^{s_0(f)-1} - 1}\right) \leq m\cdot 2^{n-m} \left(s_1(f) - \frac{2(s_0(f) - 1)}{2^{s_0(f)-1} - 1}\right),
\end{eqnarray*}
i.e.
\begin{eqnarray*}
1 + m - s_0(f) - \frac{2(s_0(f) - 1)}{2^{s_0(f)-1} - 1} \leq 2^{s_0(f)-1} \left(s_1(f) - \frac{2(s_0(f) - 1)}{2^{s_0(f)-1} - 1}\right),
\end{eqnarray*}
which implies
$$m \leq 2^{s_0(f) - 1} s_1(f) - s_0(f) + 1.$$
\qed

\subsection{Proof of Corollary~\ref{thm:bs(f)}}\label{sec:cor}

To prove Corollary~\ref{thm:bs(f)}, we need the following Lemma by Kenyon and Kutin.\footnote{In their original paper there is no ``$-\frac{1}{2}$" term, but a careful analysis will provide it.}
\begin{Lemma}\cite{KK}\label{lemm:KK}
\label{cor:cbs2}
$bs_0(f) \leq 2 ( C_1(f) - \frac{1}{2}) s_0(f),\ bs_1(f) \leq 2 ( C_0(f) - \frac{1}{2}) s_1(f)$.
\end{Lemma}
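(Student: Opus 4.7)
Since Lemma~\ref{lemm:KK} is essentially the Kenyon--Kutin block-sensitivity-to-certificate-complexity bound with the small sharpening of $-\tfrac{1}{2}$, the plan is to revisit their block-minimization argument and then squeeze the improvement out of a careful double count. By the symmetric roles of $0$ and $1$, I only prove $bs_0(f) \leq (2C_1(f) - 1)\, s_0(f)$.

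\textbf{Setup and block minimization.} Pick a $0$-input $x$ with $b := bs_0(f)$ disjoint sensitive blocks $B_1,\dots,B_b$, so $f(x^{B_i}) = 1$ for each $i$. Shrink each $B_i$ to be inclusion-minimal with this property: then for every $B' \subsetneq B_i$ we have $f(x^{B'}) = 0$. A short preliminary observation is that if $C_i$ is a minimum $1$-certificate of $x^{B_i}$, then $B_i \subseteq C_i$. Indeed, the input obtained from $x$ by flipping only $B_i \cap C_i$ is consistent with $C_i$ (and hence has value $1$), so minimality forces $B_i \cap C_i = B_i$. In particular $|B_i| \leq |C_i| \leq C_1(f)$.

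\textbf{Sensitivity bookkeeping along block walks.} For each $i$ and each $j \in B_i$, the input $z_{i,j} := x^{B_i \setminus \{j\}}$ is a $0$-input (by minimality) whose bit $j$ is sensitive, since flipping it yields the $1$-input $x^{B_i}$. More generally, walking from $x$ to $x^{B_i}$ by flipping the bits of $B_i$ one at a time keeps us at $0$-inputs until the very last step. Since $s_0(f)$ bounds the sensitivity of every $0$-input encountered, this restricts how the bits of the various certificates $C_i$ can overlap each other and the blocks $B_j$: roughly, each coordinate can be sensitive (and therefore appear as a ``new'' certificate-hitting coordinate) for only a bounded number of nearby $0$-inputs.

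\textbf{Charging and counting.} The core of the argument is a charging scheme: each block $B_i$ is charged to the ``free'' coordinates of its certificate $C_i \setminus B_i$ (of size $\leq C_1(f) - |B_i|$) together with the sensitive bits of $x$ itself that lie inside $B_i$. A double-count over pairs (block, sensitive coordinate) in which each sensitive coordinate is reused for at most $2$ blocks, combined with the sensitivity budget $|\{j : f(x^{(j)})=1\}| \leq s_0(f)$ at $x$ and analogous budgets at the walk inputs $z_{i,j}$, yields $b \leq 2\,C_1(f)\,s_0(f)$, the original Kenyon--Kutin estimate.

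\textbf{Extracting the $-\tfrac{1}{2}$.} The refinement comes from observing that in the charging above, the coordinates of $B_i \subseteq C_i$ are counted both as ``block coordinates'' (used in bounding $|B_i| \leq s_1$-type terms) and as ``certificate coordinates''; subtracting a single copy per block removes a $\sum_i 1 = b$ term which, after rearranging, turns the factor $2C_1(f)$ into $2C_1(f) - 1 = 2(C_1(f) - \tfrac12)$. The main obstacle I expect is executing the double count cleanly: one must verify that each sensitive coordinate is genuinely charged to at most $2$ blocks (not more), and that the canonical overlap bit between $B_i$ and $C_i$ can be excised from the count without inadvertently dropping a contribution elsewhere. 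Keeping the arithmetic exact, so the bound is $2C_1(f)-1$ rather than just $2C_1(f)$, is where the careful analysis mentioned in the footnote pays off.
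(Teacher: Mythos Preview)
The paper does not prove this lemma at all: it is simply quoted from Kenyon--Kutin~\cite{KK}, with the footnote remarking that a careful rerun of their argument yields the additional $-\tfrac{1}{2}$. So there is no in-paper proof to compare your proposal against.

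As for the proposal itself, it is a plan rather than a proof, and the plan has genuine gaps at the decisive steps. Your first two paragraphs are fine and standard: block minimization, the inclusion $B_i \subseteq C_i$ for a minimum $1$-certificate $C_i$ of $x^{B_i}$, and the fact that every proper subflip $x^{B'}$ with $B' \subsetneq B_i$ is a $0$-input are all correct. But from ``Charging and counting'' onward nothing is actually carried out. You never specify the bipartite incidence you intend to double-count, and the key claim that ``each sensitive coordinate is reused for at most $2$ blocks'' is asserted with no argument---indeed it is not even clear which coordinates (sensitive at $x$? at the walk points $z_{i,j}$?) are being matched to which blocks. You explicitly flag this as ``the main obstacle I expect,'' which is accurate: as written you have not derived the factor $2C_1(f)$, let alone the sharpening to $2C_1(f)-1$. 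To make this a proof you would need to (i) define precisely the set of pairs being counted, (ii) prove the degree-$2$ bound on the coordinate side, and (iii) exhibit the one-per-block overcount whose removal yields the $-1$. Until those are done, the proposal does not establish the lemma.
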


\proof (of Corollary~\ref{thm:bs(f)}) From Theorem~\ref{thm:C(f)}, we have $bs_0(f)\leq C_0(f)\leq 2^{s_1(f)-1}s_0(f).$
From Corollary~\ref{cor:cbs2} we have $bs_0(f)\leq 2(C_1(f) - \frac{1}{2}) s_0(f)$, together with Theorem~\ref{thm:C(f)} we get
$bs_0(f)\leq 2 (2^{s_0(f)-1}s_1(f)-\frac{1}{2}) s_0(f).$
Therefore,
$bs_0(f)\leq \min\{2^{s_1(f)}s_0(f),\ 2^{s_0(f)}s_1(f)s_0(f)\}.$
Similarly we can show that
$bs_1(f)\leq \min\{2^{s_1(f)}s_0(f)s_1(f),\ 2^{s_0(f)}s_1(f)\}.$
\qed

\section{Concluding remarks}
In this work we presented some results toward the sensitivity conjecture providing the first improvement since the work of Kenyon and Kutin \cite{KK}. It is certainly desirable to understand the limits of the techniques introduced here. Another interesting problem is to unify our approaches in Sections \ref{sens_degree_section} and \ref{sec:thm1}. Although, the structure of the two proofs appear quite different, the fact that they both crucially rely on Harper's isoperimetric inequality might be hinting at a more explicit relationship between the two.
 
In this work we have been mostly concerned with the original formulation of the Sensitivity Conjecture in terms of complexity measures of Boolean functions. Before ending this paper however, we would like to take the opportunity to recount a purely combinatorial formulation of the problem which may be more accessible to the wider mathematics community.

It turns out that the sensitivity conjecture is equivalent to the validity of the following Ramsey-type statement: There exists a constant $\delta>0$ such that for any unbalanced two-coloring of vertices of hypercube $\{0,1\}^n$ contains a vertex $x\in\{0,1\}^n$ such that $x$ has  $\geq n^\delta$ neighbors in the same color class as $x$. Implicit in the above statement is the following observation: It is rather easy to construct a balanced two-coloring of the vertices of Hamming cube, i.e.~ each of size $2^{n-1}$, such that each point $x\in \{0,1\}^n$ would have only the elements of the other color class as its neighbors; this can be seen by putting the points with odd Hamming weight in one class, and the ones with even Hamming weight in the other. However, after trying to find subsets of slightly larger than half with small maximum degree, one soon realizes that such sets are indeed hard to construct. \footnote{For instance, in the above example if one transfers even a single odd weight point to the set of even weight points of discrete cube, the resulting induced subgraph will have max-degree $n$.}

The above discussion provides further evidence for the well-known intuition that averaging type arguments (including most purely Fourier analytic ones) are hopeless in addressing the conjecture without further input. However, at the moment it remains unclear what type of extra input one may need, beside the well-known Fourier analytic ones, to tackle the conjecture.

\end{document}